\providecommand{\U}[1]{\protect\rule{.1in}{.1in}}
\newtheorem{theorem}{Theorem}
\newtheorem{lemma}[theorem]{Lemma}
\newtheorem{definition}[theorem]{Definition}
\newtheorem{notation}[theorem]{Notation}
\newtheorem{remark}[theorem]{Remark}
\begin{document}

\begin{frontmatter}
\title{ Majorization and  Inequalities among \\Complete Homogeneous Symmetric Functions}
\author[jx]{Jia Xu\corref{cor1}\footnote{Accepted for publication in European Journal of Combinatorics. DOI: 10.1016/j.ejc.2026.104389}}
\ead{j.jia.xu@gmail.com}
\author[yy]{Yong Yao}
\ead{yongyao525@163.com}
\cortext[cor1]{Corresponding author.}
\address[jx]{Department of Mathematics, Southwest Minzu University, Chengdu, Sichuan 610225, China}
\address[yy]{Chengdu Institute of Computer Applications, Chinese
Academy of Sciences, Chengdu, Sichuan 610213, China}
\begin{abstract}
\noindent
Inequalities among symmetric functions are fundamental in various branches of mathematics, thus motivating a systematic study of their structure. Majorization has been shown to characterize inequalities among  commonly used symmetric  functions, except for \emph{complete homogeneous} symmetric functions (shortened as CHs).
In~2011, Cuttler, Greene, and Skandera posed a natural question: Can majorization also characterize  inequalities among  CHs? Their work demonstrated that majorization characterizes  inequalities among CHs up to degree 7 and suggested exploring its validity for higher degrees.
In this paper, we show that, for every degree greater than 7, majorization does {\em not} characterize inequalities among CHs.
\end{abstract}
\begin{keyword}
complete homogeneous symmetric functions \sep
majorization \sep inequalities
\MSC 05E05 \sep 14P99 \sep 68W30
\end{keyword}
\end{frontmatter}

\section{Introduction}

Inequalities among symmetric functions  arise naturally in various
branches of mathematics with applications in science and engineering.  A central
 challenge is to characterize when such inequalities hold.
This challenge has been  addressed through  a variety of  techniques spanning diverse fields, such as algebra
\cite{HLP1952}, analysis \cite{Bu1988, Ti2003, Ti2021}, and combinatorics \cite{Ma1998, St1999}.

In 1902, Muirhead, in his celebrated work \cite{Mu1902}, established that
\emph{majorization} (see Definition \ref{de:Ma}, also known as the dominance order)  provides a systematic way to characterize inequalities
among \emph{monomial} symmetric functions  (see Definition~\ref{def:ssp}). We will illustrate this result through a few  simple examples.
 Consider the following monomial symmetric functions of degree $3$.%
\begin{align*}
m_{3,(3,0,0)}  & =x_{1}^{3}+x_{2}^{3}+x_{3}^{3},\\
m_{3,(2,1,0)}  & =x_{1}^{2}x_{2}+x_{1}^{2}x_{3}+x_{2}^{2}x_{1}+x_{2}^{2}%
x_{3}+x_{3}^{2}x_{1}+x_{3}^{2}x_{2},\\
m_{3,(1,1,1)}  & =x_{1}x_{2}x_{3}.%
\end{align*}
The  term-normalization (see Definition \ref{def:tn}) of the functions above is as follows.
\begin{align*}
M_{3,(3,0,0)}   =\frac{1}{3}m_{3,(3,0,0)},\ \
M_{3,(2,1,0)}   =\frac{1}{6}m_{3,(2,1,0)},\ \
M_{3,(1,1,1)}   =\frac{1}{1}m_{3,(1,1,1)},%
\end{align*}
where the set $\left\{(3,0,0),\ (2,1,0),\ (1,1,1)\right\}$ consists of all degree-3 partitions (see Definition~\ref{def:pt}),
which are weakly decreasing sequences of nonnegative integers.
 Consider  the following potential inequalities among them:
\begin{align*}
A  & :M_{3,(1,1,1)}\geq M_{3,(2,1,0)}.\\
B  & :\ M_{3,(3,0,0)}\geq M_{3,(2,1,0)}.%
\end{align*}
We would like to know whether each potential inequality  actually holds, that is,  it is true for all non-negative values of the variables
$x_1,x_2$ and $x_3$. It is easy to see that  $A$ does not hold, as shown by the following counterexample.
\begin{equation*}
    A(1,1,0): \quad \frac{0}{1} \geq \frac{2}{6} \quad \text{(which is of course false)}.
\end{equation*}
In contrast, it is not easy to check whether  $B$ holds.  This could be checked using complex and general decision algorithms, such as QEPCAD \cite{CH1991}, BOTTEMA \cite{BY2016}, TSDS \cite{XY2012}, or RealCertify \cite{ME2018}. However, these methods can be  highly time-consuming, especially when the number of variables and the degree increase.
Here Muirhead made a breakthrough by showing that majorization  provides an easy check.
The definition of majorization~(Definition~\ref{de:Ma}) yields
\[
(3,0,0)\succeq(2,1,0)\succeq(1,1,1).
\]
Muirhead's theorem then implies that

\begin{itemize}
\item $A$ does not hold because $(1,1,1)\not\succeq(2,1,0)$.

\item $B$ holds because $(3,0,0)$ $\succeq(2,1,0)$.
\end{itemize}

\noindent In general, Muirhead \cite{Mu1902} proved
that majorization completely characterizes inequalities among monomial symmetric functions:
\begin{theorem}[Muirhead]
For every $d\geq 1$, we have
\[
\forall{\mu,\lambda\in Par\left(d\right)   ,}\; \;\;\; \left(\;\forall n\;(M_{n,\mu}\geq M_{n,\lambda})\ \ \ \ \iff\ \ \ \ \ \mu
\succeq\lambda\;\right),
\]
where $Par\left(d\right)$ denotes the set of all partitions of $d$ (see Definition~\ref{def:pt}).
\end{theorem}

 In 2011,
Cuttler, Greene, and Skandera \cite{CGS2011} initiated an investigation into whether
majorization can also completely characterize inequalities among  other
commonly used  symmetric functions, such as elementary, power-sum, Schur, and complete homogeneous symmetric
functions (see Definition~\ref{def:ssp}). They proved that it  does so  among \emph{elementary}
symmetric functions and \emph{power-sum} symmetric functions. In 2016,  Sra
\cite{Sr2016} proved  that it also does so among \emph{Schur} functions. In 2021,
Khare and Tao \cite{KT2021} wrote down the first weak majorization inequalities using
Schur polynomials and further strengthened the Cuttler–Greene–Skandera implication
relating Schur polynomials and majorization. Subsequently,
McSwiggen and Novak \cite{MN2022} extended majorization inequalities to other Lie types.
More recently, Chen, Khare, and Sahi \cite{CKS2025} (conjecturally) characterized majorization
and weak majorization inequalities for Jack polynomials, a result that would subsume the elementary,
monomial, and Schur cases.

As we will only use two families of symmetric functions below, we direct the reader to \cite{Sm2024}, \cite{St1999} for basic
definitions and results on other families of symmetric functions, and to \cite{CGS2011} for classical results
on majorization inequalities in the literature.

However, it was not known whether majorization can also completely characterize inequalities
among \emph{complete homogeneous} symmetric functions,
the  remaining commonly  used symmetric polynomial
functions. Cuttler, Greene, and Skandera \cite{CGS2011} proved that majorization implies these
inequalities:

\begin{theorem}[Cuttler--Greene--Skandera]\label{thm:CGS}
For every $d\geq 1,$~ we have
\[
\forall{\mu,\lambda\in Par\left(d\right)   ,}\; \;\;\;\left(\; \forall n\;(H_{n,\mu}\geq H_{n,\lambda}) \ \ \ \ \Longleftarrow\ \ \ \ \ \mu \succeq\lambda\; \right).
\]
\end{theorem}
\noindent It remained to determine whether the converse holds,
namely whether the inequalities imply majorization, that is,
\begin{equation*}
\forall{\mu,\lambda\in Par\left(d\right)  ,}\; \;\;\;\left(\  \forall
n\ (H_{n,\mu}\geq H_{n,\lambda})\;\;\;\implies\;\;\;\mu\succeq\lambda
\ \right).
\end{equation*}
In what follows, we denote the above statement by $C(d)$. In the same paper,
Cuttler et al. proved the following result:
\begin{theorem}[Cuttler--Greene--Skandera]
For every $d\leq 7,$~ the statement $C(d)$ is true.
\end{theorem}
\noindent They left open the question on the truth of the  statement $C(d)$ for $d\geq 8$.
In this paper, we prove the following theorem:

\medskip

\noindent {\bf Theorem}(Main Result).
For every $d\geq 8,$~ the statement $C(d)$ is false.\footnote{
The works in \cite{HS2021,HS2019} proved $H_{3,\left(
4,4\right)  }\geq H_{3,\left(  5,2,1\right)  })\ \wedge\;\left(  4,4\right)
\not \succeq \left(  5,2,1\right)  $. Based on this, the authors claimed that
it is a counterexample for $C(8)$. However, it is not a
counterexample to $C(8)$. Instead, it is a counterexample to a
related, yet distinct statement shown below, which we will denote as $C^{\prime}\left(  8\right)  $:%
\[
\forall{\mu,\lambda\in Par\left(  8\right)  ,\ \ }\left(  \exists
n\ (H_{n,\mu}\geq H_{n,\lambda})\;\;\;\implies\;\;\;\mu\succeq\lambda\right)
.
\]
In contrast, we show for every $d\geq 8$ and two carefully chosen $\mu_{d} \not \succeq \lambda_{d}$
in $Par(d)$, that $H_{n,\mu_{d}}\geq H_{n,\lambda_{d}}$ on $[0,\infty)^{n}$ for all $n$---as opposed to merely
$n=3$ in \cite{HS2021,HS2019}. Our proof also differs from the approach in \emph{loc.\ cit.}, which used a sum-of-squares method.
}
\medskip

\noindent Hence majorization does \emph{not}
completely characterize the inequalities among complete homogeneous symmetric functions,
unlike the other commonly used symmetric functions.

We use the following proof strategy. For each degree $d \geq8$, we judiciously choose
special~$\mu$ and $\lambda$
and show that $\mu\not \succeq \lambda$ and
$\forall n\ (H_{n,\mu} \geq
H_{n,\lambda})$.
It is straightforward  to check  $\mu\not \succeq \lambda$.
Thus the main difficulty lies  in
proving $\forall n\ (H_{n,\mu} \geq H_{n,\lambda})$. For that, we employ an
inductive approach. First, we prove the claim for $d = 8$ by induction on $n
\geq2$, reducing the problem to the polynomial optimization  problem on the
standard simplex (see~(\ref{pb:op})). Then, we extend the result to $d > 8$
using a relaxation method (Lemma~\ref{le:ml}).

  We hope that the finding in this paper inspires future research  on
the following question: What could be a \emph{relaxation} of majorization that could completely
characterize the inequalities among complete homogeneous symmetric functions?

The remainder of the paper is structured as follows. Section~\ref{sec:pre}
introduces necessary definitions and notations. Section~\ref{sec:main}
presents the main result and provides a proof.

\section{Preliminaries}

\label{sec:pre}

We review several standard definitions and notations that were used in the
introduction and that will be used in the next sections. Readers who are
already familiar with them can choose to skip this section and refer back to
it later if needed.

\begin{definition}
[Partition, Chapter 1.1 of \cite{Ma1998}]\label{def:pt}Let $d\in\mathbb{N^{+}}$. The set of all
partitions of $d$, denoted by $Par(d)$, is defined by%
\[
Par(d)=\left\{  \left(  \lambda_{1},\ldots,\lambda_{d}\right)  \in\mathbb{N}%
^{d}:\lambda_{1}\geq\cdots\geq\lambda_{d}\geq0\ \,\text{and }\lambda_{1}%
+\cdots+\lambda_{d}=d\right\}  .
\]
\end{definition}

\begin{remark}
\

\begin{enumerate}
\item We will delete $0$ included in the elements of a partition if there is
no confusion. For example, $(2,1,0)$ can be written briefly as $(2,1)$.

\item If $m>1$ consecutive entries of the partition $\lambda$ are equal to an integer $k$,
we write this block as $k^{m}$ for brevity.
For example, $(2,1,1,1)$ can be written as $(2,1^{3})$.
\end{enumerate}
\end{remark}

\begin{definition}
[Majorization, \cite{MOA2011}, p.~8]\label{de:Ma} Let $\mu,\lambda\in Par(d)$.
We say that $\mu$ majorizes $\lambda,\ \,$ and write $\mu\succeq\lambda$,\ if%
\[
\forall{1\leq j\leq d-1} \; \left(  \sum_{i=1}^{j}\mu_{i}\;\;\geq
\;\;\sum_{i=1}^{j}\lambda_{i}\right)  .
\]

\end{definition}

\begin{definition}
[Commonly used Symmetric functions,  \cite{ Sm2024, St1999}]\label{def:ssp} \
Fix an integer $n\ge1$, and let $\lambda=(\lambda_1,\dots,\lambda_k)$ be a partition.

\medskip

(i) \emph{Monomial symmetric functions} $m_{n,\lambda}$:
\[
m_{n,\lambda}=\sum_{\pi}x_{\pi(1)}^{\lambda_{1}}x_{\pi(2)}^{\lambda_{2}}\cdots
x_{\pi(k)}^{\lambda_{k}}\ \ \ (k\leq n),
\]
where the sum ranges over all distinct permutations $\pi$ of $\{1,\dots,n\}$
such that all terms
$x_{\pi(1)}^{\lambda_1}\cdots x_{\pi(k)}^{\lambda_k}$ are distinct.
\medskip

(ii) \emph{Power-sum symmetric functions} $p_{n,\lambda}$:
\[
p_{n,\lambda}=\prod_{j=1}^{k}p_{n,\lambda_{j}},\quad  \text{where }\ p_{n,\lambda
_{j}}=\sum_{1\leq i\leq n}x_{i}^{\lambda_{j}}. %
\]

\medskip

(iii) \emph{Complete homogeneous symmetric functions} $h_{n,\lambda}$:
\[
h_{n,\lambda}=\prod_{j=1}^{k}h_{n,\lambda_{j}},\quad  \text{where }\ h_{n,\lambda
_{j}}=\sum_{1\leq i_{1}\leq\cdots\leq i_{\lambda_{j}}\leq n}x_{i_{1}}\cdots
x_{i_{\lambda_{j}}}\ \ \ \ (\text{with}\ \ h_{n,0}=1).
\]
\end{definition}

\noindent We note that, unlike classical treatments of symmetric functions in infinitely many variables,
here we consider only finitely many variables, which allows evaluation on a finite-dimensional positive orthant, following Newton \cite{Ne1732},
Muirhead \cite{Mu1902}, Cuttler–Greene–Skandera \cite{CGS2011}, and other works on majorization inequalities in the literature.

\begin{definition}[Term-normalization, \cite{CGS2011}]\label{def:tn}
Let $f(x)$ be any symmetric function defined above.
Its \emph{term-normalization}, denoted by $F(x)$, is
\[
  F(x)=\frac{f(x)}{f(1,\dots,1)}.
\]
For a family $\{f_{\mu} : \mu\in Par(d)\}$ of symmetric functions, we write $F_{\mu}$ for the corresponding term-normalized functions.
\end{definition}

\medskip

\noindent\textbf{Notation.}

We denote the standard simplex and its interior by
\[
  \Delta_n = \{ x \in [0,\infty)^n : x_1 + \cdots + x_n = 1 \},
  \qquad
  \Delta_n^\circ = \Delta_n \cap (0,\infty)^n .
\]

For \(n,d\ge1\) and \(\mu,\lambda\in Par(d)\), we write
\[
  F_{n,\mu} \ge F_{n,\lambda}
  \quad\text{if}\quad
  F_{n,\mu}(x)\ge F_{n,\lambda}(x)\quad\text{for all }x\in[0,\infty)^n .
\]

\section{Main result}

\label{sec:main} In this section, we state and prove the main result of the paper.

\begin{notation}
Let $C\left(  d\right)  $ denote the following statement:
\[
\forall{\mu,\lambda\in Par(d)  ,\ \ }\left(  \ \forall
n\ (H_{n,\mu}\geq H_{n,\lambda})\;\;\;\implies\;\;\;\mu\succeq\lambda
\ \right)  .
\]

\end{notation}

\begin{theorem}
[Main Result]\label{Th:mt}For every $d\geq8$, the statement $C\left(  d\right)
$ is false.
\end{theorem}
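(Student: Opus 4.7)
The plan is to prove the theorem by exhibiting, for every $d\geq 8$, an explicit pair of partitions $\mu,\lambda\in Par(d)$ such that $\mu\not\succeq\lambda$ yet $H_{n,\mu}\geq H_{n,\lambda}$ on $\mathbb{R}^n_{\geq 0}$ for every $n\geq 2$; any such pair violates $C(d)$. The natural base-case choice is $d=8$ with $\mu=(4,4)$ and $\lambda=(5,2,1)$, motivated by the fact, noted in the introductory footnote, that the inequality $H_{3,(4,4)}\geq H_{3,(5,2,1)}$ already holds at $n=3$. The non-majorization $\mu\not\succeq\lambda$ is immediate from Definition~\ref{de:Ma}, since the first partial sums satisfy $4<5$. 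The entire remaining content is the universal inequality $H_{n,(4,4)}\geq H_{n,(5,2,1)}$ on $\mathbb{R}^n_{\geq 0}$ for every $n$.

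I first handle this hard direction at $d=8$ by induction on $n\geq 2$. Homogeneity of $h_{n,\nu}$ and non-negativity of the variables reduce the statement to showing $D_n(x):=H_{n,(4,4)}(x)-H_{n,(5,2,1)}(x)\geq 0$ on the standard simplex $\Delta_n:=\{x\in\mathbb{R}^n_{\geq 0}:x_1+\cdots+x_n=1\}$. A useful preliminary observation is that at the uniform point $x^\star=(1/n,\dots,1/n)$ one has $D_n(x^\star)=0$, because $h_{n,\nu}$ is homogeneous of degree $|\nu|=8$, so the normalization forces $H_{n,\nu}(x^\star)=n^{-8}$ for both $\nu=(4,4)$ and $\nu=(5,2,1)$; hence the infimum of $D_n$ on $\Delta_n$ is already at most $0$, and the goal is to show it equals $0$. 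The base case $n=2$ reduces, after substituting $x_2=1-x_1$, to a univariate polynomial inequality on $[0,1]$ that can be verified directly by a Sturm-sequence argument or by any of the decision procedures cited in the introduction. For the inductive step $n\geq 3$, examine a minimizer of $D_n$ on $\Delta_n$: if it lies on the boundary, then some coordinate vanishes and $D_n$ restricts, up to explicit ratios of normalization constants $K_{n,\nu}:=h_{n,\nu}(1,\dots,1)$, to a non-negative combination of $H_{n-1,(4,4)}$ and $H_{n-1,(5,2,1)}$, which is non-negative by the inductive hypothesis provided one checks the ratio inequality $K_{n-1,(4,4)}/K_{n,(4,4)}\geq K_{n-1,(5,2,1)}/K_{n,(5,2,1)}$; if it lies in the interior, the Lagrange/KKT conditions combined with the $S_n$-symmetry of $H_{n,\nu}$ force the critical coordinates to take only a bounded number of distinct values, reducing the problem to the low-dimensional polynomial optimization Problem~(\ref{pb:op}).

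The main obstacle I foresee is the interior critical-point analysis: one must show that no symmetric interior critical point of $D_n$ realizes a negative value, \emph{uniformly in $n$}, and this is exactly what Problem~(\ref{pb:op}) encapsulates. After parametrizing the symmetric critical locus by the multiplicities and the small number of distinct coordinate values---so that the parametrization depends on $n$ only through discrete multiplicity data---one has to produce a certificate of non-negativity for the resulting parametric polynomial, e.g.\ a sum-of-squares decomposition, a Positivstellensatz witness, or a Sturm-based argument. Producing a single certificate that works uniformly in $n$, rather than a family of ad hoc checks, is the delicate step.

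Finally, to lift from $d=8$ to arbitrary $d>8$, set $\mu^{(d)}:=(4,4,1^{d-8})$ and $\lambda^{(d)}:=(5,2,1^{d-7})$, both in $Par(d)$. Non-majorization $\mu^{(d)}\not\succeq\lambda^{(d)}$ is again immediate since the first partial sums still give $4<5$. For the inequality, the multiplicative identity $h_{n,\nu\cup(1^k)}=h_{n,\nu}\cdot h_{n,1}^k$, together with its evaluation at $(1,\dots,1)$, yields $H_{n,\nu\cup(1^k)}=H_{n,\nu}\cdot H_{n,1}^k$; multiplying the base inequality by $H_{n,1}^{d-8}\geq 0$---which is the relaxation content of Lemma~\ref{le:ml}---gives $H_{n,\mu^{(d)}}\geq H_{n,\lambda^{(d)}}$ for every $n$, completing the proof.
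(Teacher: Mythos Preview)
Your overall architecture matches the paper's: exhibit an explicit pair $\mu,\lambda$ with $\mu\not\succeq\lambda$, prove $H_{n,\mu}\geq H_{n,\lambda}$ by induction on $n$ via a boundary/interior split on the simplex, and then pad to reach higher $d$. The gap is that you do not actually carry out the interior step; you call it ``the main obstacle'' and ``the delicate step'' and then only say that ``one has to produce a certificate of non-negativity for the resulting parametric polynomial''. That is a description of the remaining work, not a proof. Nor does the footnote you invoke help: its whole point is that $H_{3,(4,4)}\geq H_{3,(5,2,1)}$ alone is \emph{not} a counterexample to $C(8)$; it does not assert the inequality for all $n$.

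Moreover, your choice of base pair makes that missing step markedly harder than the paper's. The paper uses $\mu=(\mathbf{2}_4)$, $\lambda=(3,\mathbf{1}_5)$, whose largest part is $3$. By Lemma~\ref{le:dh-h}, every $\partial h_{n,k}/\partial x_i$ entering the Lagrange system is then at most quadratic in the distinguished variable $x_i$, so the stationarity condition is a degree-$2$ polynomial in $p_i$ with coefficients independent of $i$ (Lemma~\ref{le:lgr}); hence an interior minimizer has at most \emph{two} distinct coordinate values. The reduced problem then depends on a single ratio $t$ and two multiplicities $(u,v)$, and the paper dispatches it by an explicit, fully executed computation: after the shift $u=k+1$, $v=\ell+1$, the factor $\breve{J}_n(\mathbf{t}_{k+1},\mathbf{1}_{\ell+1})$ has manifestly nonnegative coefficients in $t,k,\ell$. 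With your $(4,4)$ and $(5,2,1)$ the largest part is $5$, so $\partial h_{n,5}/\partial x_i$ contributes an $x_i^4$ term and the Lagrange polynomial has degree $4$ in $x_i$; interior minimizers may therefore have up to four distinct coordinate values, and the reduced problem carries several continuous parameters and multiplicities, uniformly in $n$. That is a different order of difficulty, and you have not shown it can be closed.

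On the positive side, your boundary ratio inequality $K_{n-1,(4,4)}/K_{n,(4,4)}\geq K_{n-1,(5,2,1)}/K_{n,(5,2,1)}$ is in fact true (a direct computation gives $n(n+1)(n+4)-(n-1)(n+3)^2=n+9>0$), and your lift from $d=8$ to $d>8$ by simultaneous padding with $1$'s is correct and even simpler than the paper's Lemma~\ref{le:ml}, which needs the extra ingredient $H_{n,(2)}\geq H_{n,(1,1)}$ because it replaces a pair of $1$'s by a single $2$ at each step.
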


\noindent Before presenting the technical details, we first provide an
overview of the proof structure to help the reader grasp the overall strategy.
\bigskip

\noindent\textsc{Top-level Structure of the Proof:}
For each fixed $d\geq8$, it suffices to find $\mu,\lambda\in Par(d)$ such that
$\mu\not \succeq \lambda$ but $(H_{n,\mu}\geq H_{n,\lambda})$ on $[0,\infty)^n$
for each $n$. We now propose such $\mu,\lambda$:
\[
\mu := (2^{\lfloor d/2 \rfloor},\, 1^{\, d - 2\lfloor d/2 \rfloor}),
\qquad
\lambda:=(3,1^{d-3}),
\qquad
d\ge 8.
\]
\noindent It is clear that $\mu\not \succeq \lambda$, since $\mu_1<\lambda_1$.
However, showing that $(H_{n,\mu}\geq H_{n,\lambda})$ for
each $n$ is non-trivial, and is the
content of this work. Here, we provide a bird's-eye view.
\begin{itemize}
\item We first prove the claim for the degree $d=8$ by induction on the number
of variables $n\geq2$, transforming the problem into a polynomial
optimization problem on the standard simplex (see~(\ref{pb:op})). The details are provided in
Lemma~\ref{le:d8} and its proof.

\medskip

\item Briefly put, the proof of Lemma~\ref{le:d8} is divided into two cases,
depending on where a minimizer lies: the interior and the boundary of the
simplex. When it is in the interior, we reduce the number of variables from
$n$ to~$2,$ by exploiting the symmetry of the equations arising from Lagrange
multiplier theorem (see Lemma~\ref{le:lgr} and Lemma~\ref{le:dh-h}). When it
is on the boundary, we reduce the number of variables from $n$ to $n-1.$

\medskip

\item We then extend the result to the degrees $d>8$ by repeatedly using a
relaxation method. The details are given in Lemma \ref{le:ml} and its proof.
\end{itemize}

\noindent This concludes the top-level structure of the proof. We now present the details.

\bigskip

\noindent The formula in the following Lemma~\ref{le:dh-h} will be used in the
proof of the subsequent Lemma~\ref{le:lgr}.

\begin{lemma}\label{le:dh-h}
For $k\in\mathbb{N}$ and $i\in \{1,\ldots,n\}$,
\[
\frac{\partial h_{n,k}}{\partial x_i} = \sum_{j=0}^{k-1} h_{n,j}\,x_i^{\,k-1-j}.
\]
\end{lemma}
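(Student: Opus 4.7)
The plan is to prove the identity by working with the generating function for the complete homogeneous symmetric functions, which encodes all $h_{n,k}$ simultaneously and turns the claim into a straightforward coefficient extraction.

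First, I would recall (or briefly re-derive from the definition of $h_{n,k}$ as a sum over weakly increasing tuples) the standard identity
\[
\sum_{k\geq 0} h_{n,k}\, t^k \;=\; \prod_{j=1}^{n} \frac{1}{1-x_j t}.
\]
Differentiating both sides with respect to $x_i$ gives
\[
\sum_{k\geq 0} \frac{\partial h_{n,k}}{\partial x_i}\, t^k \;=\; \frac{t}{1-x_i t}\,\prod_{j=1}^{n}\frac{1}{1-x_j t} \;=\; \Bigl(\sum_{m\geq 1} x_i^{m-1} t^m\Bigr)\Bigl(\sum_{j\geq 0} h_{n,j}\, t^j\Bigr),
\]
where I have used the factor $\tfrac{t}{1-x_i t} = \sum_{m\geq 1} x_i^{m-1} t^m$ and reinserted the original generating series.

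Then I would read off the coefficient of $t^k$ on the right-hand side. It equals
\[
\sum_{\substack{m+j=k\\ m\geq 1,\ j\geq 0}} x_i^{m-1}\,h_{n,j} \;=\; \sum_{j=0}^{k-1} h_{n,j}\, x_i^{k-1-j} \;=\; h_{n,0}\,x_i^{k-1} + h_{n,1}\,x_i^{k-2} + \cdots + h_{n,k-1}\,x_i^{0},
\]
which matches the claimed formula and completes the proof. I expect no serious obstacle; the only mild point is justifying the generating function identity and the term-by-term differentiation, both of which are standard for formal power series.

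As a backup (in case the authors prefer to avoid generating functions), I would note that the same result follows by induction on $k$ from the elementary decomposition
\[
h_{n,k} \;=\; x_i\, h_{n,k-1} \;+\; \tilde h_{n-1,k},
\]
where $\tilde h_{n-1,k}$ is the complete homogeneous symmetric function in the $n-1$ variables with $x_i$ removed (and so has zero derivative with respect to $x_i$). Differentiating and applying the inductive hypothesis to $\partial h_{n,k-1}/\partial x_i$ immediately yields the stated expansion, with the base case $k=1$ giving $\partial h_{n,1}/\partial x_i = 1 = h_{n,0}\,x_i^{0}$.
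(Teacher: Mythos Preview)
Your primary argument is correct and is essentially the same as the paper's proof: both differentiate the generating function $\sum_{k\ge 0} h_{n,k}t^k=\prod_j (1-x_jt)^{-1}$, rewrite the result as $\bigl(\sum_{k\ge 0}h_{n,k}t^k\bigr)\cdot \tfrac{t}{1-x_it}$, expand $\tfrac{t}{1-x_it}$ as a power series in $t$, and compare coefficients of $t^k$. Your backup induction via the decomposition $h_{n,k}=x_i\,h_{n,k-1}+\tilde h_{n-1,k}$ is also valid and gives a more elementary alternative that the paper does not mention.
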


\begin{proof}
This identity is standard; it follows directly from differentiating the generating function of
$h_{n,k}$ with respect to $x_i$ and comparing coefficients
(see \cite[Chapter I]{Ma1998} and \cite[Chapter 7]{St1999} for the generating function method).
\end{proof}

\begin{lemma}
\label{le:lgr}Let $J_{n}=H_{n,\left(  2^{4}\right)  }-H_{n,\left(
3,1^{5}\right)  }$. If $p$ is a minimizer of $J_{n}$ over $\Delta
_{n}^{\circ}$, then $p\ $has at most two distinct components, that is,
\[
p \text{ is a permutation of } (t^u,r^v)=(\underset{u}{\underbrace{t,\dots,t}},\underset{v}{\underbrace{r,\dots,r}}),
\]
for some $t,r\in\mathbb{R}_{+}$ and some $u,v\in\mathbb{N}$ such that $u+v=n$.
\end{lemma}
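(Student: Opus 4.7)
The plan is to apply the method of Lagrange multipliers, exploiting the fact that the only equality constraint defining $\Delta_n^\circ$ is $h_{n,1}=x_1+\cdots+x_n=1$ (the positivity conditions $x_i>0$ are inactive at an interior minimizer). Thus there exists $\lambda\in\mathbb{R}$ with
$$\frac{\partial J_n}{\partial x_i}(p)=\lambda \qquad \text{for every } i\in\{1,\ldots,n\}.$$

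Next, I would unfold the term-normalizations by writing
$$J_n=\frac{h_{n,2}^{\,4}}{a_n}-\frac{h_{n,3}\,h_{n,1}^{\,5}}{b_n},\qquad a_n:=h_{n,(\mathbf{2}_4)}(1,\ldots,1),\ \ b_n:=h_{n,(3,\mathbf{1}_5)}(1,\ldots,1),$$
and apply Lemma \ref{le:dh-h} to read off
$$\frac{\partial h_{n,1}}{\partial x_i}=1,\qquad \frac{\partial h_{n,2}}{\partial x_i}=h_{n,1}+x_i,\qquad \frac{\partial h_{n,3}}{\partial x_i}=h_{n,2}+x_i h_{n,1}+x_i^{\,2}.$$
A direct product-rule expansion then yields an explicit formula for $\partial J_n/\partial x_i$. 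Evaluating at $p$ and using $h_{n,1}(p)=1$, the stationarity condition reduces to
$$\frac{4\,h_{n,2}(p)^{\,3}(1+p_i)}{a_n}\;-\;\frac{h_{n,2}(p)+p_i+p_i^{\,2}+5\,h_{n,3}(p)}{b_n}\;=\;\lambda,\qquad i=1,\ldots,n.$$

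The crucial observation is that, read as an equation in the unknown $p_i$ while treating $h_{n,2}(p)$, $h_{n,3}(p)$, $a_n$, $b_n$, $\lambda$ as constants, this is a polynomial equation of degree $2$ whose coefficients do not depend on $i$. Collecting terms produces
$$-\tfrac{1}{b_n}\,p_i^{\,2}\;+\;\Bigl(\tfrac{4\,h_{n,2}(p)^{\,3}}{a_n}-\tfrac{1}{b_n}\Bigr)p_i\;+\;\Bigl(\tfrac{4\,h_{n,2}(p)^{\,3}}{a_n}-\tfrac{h_{n,2}(p)+5\,h_{n,3}(p)}{b_n}-\lambda\Bigr)\;=\;0,$$
whose leading coefficient $-1/b_n$ is strictly negative. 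Hence this is a bona fide quadratic, admitting at most two distinct real roots, and every $p_i$ is one of them. Consequently $\{p_1,\ldots,p_n\}$ contains at most two distinct values, which gives precisely the form stated in the lemma.

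I do not anticipate a genuine obstacle here: once Lemma \ref{le:dh-h} delivers the partial derivatives of the $h_{n,k}$'s in closed form, the bulk of the work is a mechanical product-rule expansion that the constraint $h_{n,1}(p)=1$ collapses into a single quadratic identity shared by all coordinates. The only point that requires a moment's care is checking that this identity is genuinely quadratic and not a degenerate $0=0$, which is immediate from the non-vanishing of the leading coefficient $-1/b_n$.
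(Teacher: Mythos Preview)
Your proposal is correct and follows essentially the same approach as the paper: apply Lagrange multipliers on $\Delta_n^\circ$, use Lemma~\ref{le:dh-h} to expand the partials of $h_{n,k}$, and observe that the resulting stationarity condition is a single quadratic in $p_i$ with $i$-independent coefficients and nonzero leading term. The only cosmetic difference is that you substitute $h_{n,1}(p)=1$ at the outset, whereas the paper carries $h_{n,1}$ symbolically and writes the normalizing constants as explicit binomial products; neither affects the argument.
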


\begin{proof}
We divide the proof into several steps to enhance clarity.

\begin{enumerate}

\item $p\in\Delta_{n}^{\circ}$ implies $h_{n,1}(p)-1=0$ by Definition \ref{def:ssp}. Thus,
by applying the Lagrange multiplier theorem (\cite[p.~383]{Ap1974}), we have
\begin{equation}
\exists{\lambda\in\mathbb{R}},\; \forall{i},\; \left(  W_{i}\left(  p\right)
=0\right)  , \label{eq:langrange}%
\end{equation}
where%
\[
W_{i}=\frac{\partial J_{n}}{\partial x_{i}}-\lambda\frac{\partial h_{n,1}%
}{\partial x_{i}}.
\]

\item In this step, we will repeatedly rewrite and simplify $W_{i}$. Firstly, by
recalling the definition $J_{n}=H_{n,\left(  2^{4}\right)
}-H_{n,\left(  3,1^{5}\right)  }.$ We have
\[
W_{i}=\frac{\partial H_{n,\left(  2^{4}\right)  }}{\partial x_{i}}%
-\frac{\partial H_{n,\left(  3,1^{5}\right)  }}{\partial x_{i}%
}-\lambda\frac{\partial h_{n,1}}{\partial x_{i}}.\
\]
\newline Then, by recalling
\[
H_{n,\lambda}=\frac{1}{\binom{n+\lambda_{1}-1}{\lambda_{1}}\cdots
\binom{n+\lambda_{d}-1}{\lambda_{d}}}\ h_{n,\lambda},
\]
we have%
\[
W_{i}=\frac{\partial\frac{h_{n,2}^{4}}{\binom{n+1}{2}^{4}}}{\partial x_{i}}%
-\frac{\partial\frac{h_{n,3}h_{n,1}^{5}}{\binom{n+2}{3}\binom{n}{1}^{5}}%
}{\partial x_{i}}-\lambda\frac{\partial h_{n,1}}{\partial x_{i}}.
\]
Now, we differentiate the expression above using the chain and product rules,
obtaining%
\begin{equation}
\label{eq:W}W_{i}=\frac{4h_{n,2}^{3}\frac{\partial h_{n,2}}{\partial x_{i}}%
}{\binom{n+1}{2}^{4}}-\frac{\left(  \frac{\partial h_{n,3}}{\partial x_{i}%
}\right)  \left(  h_{n,1}^{5}\right)  +\left(  h_{n,3}\right)  \left(
5h_{n,1}^{4}\frac{\partial h_{n,1}}{\partial x_{i}}\right)  }{\binom{n+2}%
{3}^{1}\binom{n}{1}^{5}}-\lambda\frac{\partial h_{n,1}}{\partial x_{i}}.
\end{equation}
By applying Lemma \ref{le:dh-h} to $k=1,2,3,$ we have%
\begin{equation}
\label{eq:dh}%
\begin{array}
[c]{l}%
\frac{\partial h_{n,1}}{\partial x_{i}} =h_{n,0},\\
\frac{\partial h_{n,2}}{\partial x_{i}} =h_{n,0}x_{i}+h_{n,1},\\
\frac{\partial h_{n,3}}{\partial x_{i}} =h_{n,0}x_{i}^{2}+h_{n,1}x_{i}%
+h_{n,2}.
\end{array}
\end{equation}
By plugging $\left(  \ref{eq:dh}\right)  $ into $\left(  \ref{eq:W}\right)  $
we have%
\[
W_{i}=\frac{4h_{n,2}^{3}\left(  h_{n,0}x_{i}+h_{n,1}\right)  }{\binom{n+1}{2}^{4}%
}-\frac{\left(  h_{n,0}x_{i}^{2}+h_{n,1}x_{i}+h_{n,2}\right)  h_{n,1}%
^{5}+5h_{n,3}h_{n,1}^{4}h_{n,0}}{\binom{n+2}{3}\binom{n}{1}^{5}}-\lambda
h_{n,0}.
\]
Finally, by collecting in the \textquotedblleft explicit\textquotedblright%
\ powers of $x_{i},$ we have%
\begin{equation}
W_{i} = ax_{i}^{2}+bx_{i}+c, \label{eq:SE}%
\end{equation}
where%
\begin{align*}
a  &  =-\frac{h_{n,0}h_{n,1}^{5}}{\binom{n+2}{3}\binom{n}{1}^{5}},\\
b  &  =\frac{4h_{n,2}^{3}h_{n,0}}{\binom{n+1}{2}^{4}}-\frac{h_{n,1}h_{n,1}%
^{5}}{\binom{n+2}{3}\binom{n}{1}^{5}},\\
c  &  =\frac{4h_{n,2}^{3}h_{n,1}}{\binom{n+1}{2}^{4}}-\frac{h_{n,2}h_{n,1}%
^{5}+5h_{n,3}h_{n,1}^{4}h_{n,0}}{\binom{n+2}{3}\binom{n}{1}^{5}}-\lambda
h_{n,0}.
\end{align*}
Note that $a$ and $b$ depend on $x$, while $c$ depends on both $x$ and
$\lambda$.

\item By combining $\left(  \ref{eq:langrange}\right)  $ and $\left(
\ref{eq:SE}\right)  ,$ we have
\begin{equation}
{\exists}{\lambda\in\mathbb{R}},\: {\forall}{i}, \left(  a\left(  p\right)
p_{i}^{2}+b\left(  p\right)  p_{i}+c\left(  p,\lambda\right)  =0\right)  .
\label{p2}%
\end{equation}

\item Note that $a\left(  p\right)  ,b\left(  p\right)  ,c\left(
p,\lambda\right)  $ do $\emph{not}$ depend on the index $i$. This motivates
the introduction of the following object:%
\[
w_{p,\lambda}\left(  y\right)  =a\left(  p\right)  y^{2}+b\left(  p\right)
y+c\left(  p,\lambda\right)  \in\mathbb{R}\left[  y\right]  ,
\]
where $y$ is a new indeterminate. Then we can rewrite (\ref{p2}) as%
\[
\exists{\lambda\in\mathbb{R}},\; \forall{i},\; \left(  w_{p,\lambda}\left(
p_{i}\right)  =0\right)  .
\]

\item It says that every $p_{i}$ is a root of $w_{p,\lambda}$. Note that
$\deg_{y}w_{p,\lambda}=2$ since $a\left(  p\right)  \neq0$ (immediate from
$h_{n,0}\left(  p\right)  =1$ and $h_{n,1}\left(  p\right)  =p_{1}%
+\cdots+p_{n}=1$). Hence $w_{p,\lambda}$ has at most two solutions. Therefore
finally we conclude that $p\ $has at most two distinct components.\qedhere
\end{enumerate}
\end{proof}

\begin{lemma}
[Degree $d=8$]\label{le:d8} We have
\[
\forall n\; \left(H_{n,\left(  2^{4}\right)  }\geq H_{n,\left(
3,1^{5}\right)  }\right).
\]

\end{lemma}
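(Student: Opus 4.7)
Since $J_n := H_{n,(\mathbf{2}_4)} - H_{n,(3,\mathbf{1}_5)}$ is homogeneous of degree $8$, the claim is equivalent to $\min_{\Delta_n} J_n \geq 0$ for every $n \geq 2$. A minimizer $p$ exists by compactness of $\Delta_n$ and continuity of $J_n$. The plan is to induct on $n$, splitting on whether $p \in \Delta_n^\circ$ or $p \in \partial \Delta_n$.

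\emph{Base case $n = 2$.} Parametrize $p = (x, 1 - x)$ and set $q := x(1-x) \in [0, 1/4]$. Then $h_{2,1}(p) = 1$, $h_{2,2}(p) = 1 - q$, $h_{2,3}(p) = 1 - 2q$, and $J_2(p) \geq 0$ reduces to $128(1 - q)^4 \geq 81(1 - 2q)$ on $[0, 1/4]$; the difference factors as $(4q - 1)(32q^3 - 120q^2 + 162q - 47)$, both of whose factors are nonpositive on the interval, giving a product that is nonnegative.

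\emph{Inductive step.} Assume the result for $n - 1$ and let $p$ minimize $J_n$ on $\Delta_n$. If $p \in \Delta_n^\circ$, Lemma~\ref{le:lgr} forces $p$ to have at most two distinct values, so up to permutation $p = (t, \ldots, t, r, \ldots, r)$ with $u$ copies of $t$, $v$ copies of $r$, $u + v = n$, and $ut + vr = 1$ (the uniform case $u = n$ or $v = n$ gives $p = (1/n,\ldots,1/n)$, at which $J_n(p) = 0$ directly). Via the generating function $\prod_j (1 - x_j z)^{-1}$, I would express $h_{n,1}(p), h_{n,2}(p), h_{n,3}(p)$ as explicit polynomials in $t, r$ with integer parameters $u, v$; eliminating $r = (1 - ut)/v$ reduces $J_n(p) \geq 0$ to a one-variable polynomial inequality in $t \in [0, 1/u]$ whose nonnegativity must be verified uniformly in $(u, v)$ with $u + v = n$. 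If instead $p \in \partial \Delta_n$, WLOG $p_n = 0$; writing $p' := (p_1, \ldots, p_{n-1})$ and using $h_{n,k}(p) = h_{n-1,k}(p')$, the inductive hypothesis yields $\binom{n+1}{3}(n-1)^5 \, h_{n-1,2}(p')^4 \geq \binom{n}{2}^4 \, h_{n-1,3}(p') h_{n-1,1}(p')^5$, whereas $J_n(p) \geq 0$ requires the analogous inequality with coefficients $\binom{n+2}{3} n^5$ and $\binom{n+1}{2}^4$. A short manipulation reduces this deduction to the purely numerical claim $(n+2) n^5 \geq (n-1)^2 (n+1)^4$, equivalent to $n^4 + 4 n^3 + n^2 - 2 n - 1 \geq 0$, which is immediate for $n \geq 2$.

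\emph{Main obstacle.} The technical heart is the interior case: although Lemma~\ref{le:lgr} has cut the dimension from $n$ down to $2$, the resulting univariate polynomial inequality in $t$ still carries the two integer parameters $u, v$ with $u + v = n$. Establishing its nonnegativity uniformly in these parameters---by clearing denominators and exhibiting a Schur-like rearrangement or a sum-of-squares certificate---is where the argument expects to expend the most effort.
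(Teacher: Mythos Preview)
Your overall architecture matches the paper exactly: induction on $n$, split into interior versus boundary minimizers, use Lemma~\ref{le:lgr} in the interior to cut to two distinct values, and use $h_{n,k}(p',0)=h_{n-1,k}(p')$ on the boundary. Your base case is a valid alternative to the paper's (which factors $J_2(x_1,x_2)=(x_1-x_2)^2 P(x_1,x_2)$ with $P$ having only positive coefficients); your $q=x(1-x)$ parametrization and the factorization $(4q-1)(32q^3-120q^2+162q-47)$ are correct and arguably tidier. Your boundary reduction to $(n+2)n^5 \geq (n-1)^2(n+1)^4$, i.e.\ $n^4+4n^3+n^2-2n-1\geq 0$, is exactly the inequality $T(n)\geq T(n-1)$ the paper proves.

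The genuine gap is the interior case, which you correctly flag but do not close. Two concrete ideas from the paper would finish it, and neither is the SOS or Schur-rearrangement route you suggest. First, rather than eliminating $r$ via the simplex constraint, use homogeneity to set $r=1$ and work with $J_n(\mathbf{t}_u,\mathbf{1}_v)$ for $t\in\mathbb{R}_+$; this keeps the expression polynomial in $t,u,v$ without rational denominators in $v$. Second, and this is the decisive step, factor out
\[
\hat J_n=\frac{uv(t-1)^2}{(u+v+2)(u+v+1)^4(u+v)^6},
\]
which is forced because $J_n$ vanishes at the uniform point $t=1$ and when $u=0$ or $v=0$. The remaining factor $\breve J_n$ is a degree-$6$ polynomial in $t$ whose coefficients, written in $u,v$, still contain negative terms; but since one may assume $u,v\geq 1$, the substitution $u=k+1$, $v=\ell+1$ with $k,\ell\geq 0$ turns every coefficient $c_i(k,\ell)$ into a polynomial with \emph{only nonnegative} integer coefficients (the paper verifies this by computer algebra). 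That positivity-of-coefficients certificate is what replaces your hoped-for SOS decomposition and handles all $(u,v)$ simultaneously.
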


\begin{proof}
Let $J_{n}=H_{n,\left(  2^{4}\right)  }-H_{n,\left(  3, 1^{5}\right)  }$. We will prove $J_{n}\geq0$ by induction on $n$. For $n=1$, we have
\[
J_1 = H_{1,(2^4)} - H_{1,(3,1^5)} = h_{1,(2^4)} - h_{1,(3,1^5)} = x^{8} - x^{8} = 0.
\]
Therefore, we assume $n\geq2$ from this point onward.

\medskip

\noindent\textsf{Induction base}: We will show that $J_{2}\geq0$. By factoring
$J_{2}$, using a computer algebra system~\footnote{
\url{https://github.com/XuYao7/Computation.git}}, we obtain
\[
J_{2}(x_{1},x_{2})=(x_{1}-x_{2})^{2}P(x_{1},x_{2}),
\]
where
\[
P\left(  x_{1},x_{2}\right)  =\frac{1}{10368}\left(  47(x_{1}^{6}+x_{2}%
^{6})+120(x_{1}^{5}x_{2}+x_{1}x_{2}^{5})+177(x_{1}^{4}x_{2}^{2}+x_{1}^{2}%
x_{2}^{4})+176x_{1}^{3}x_{2}^{3}\right)  .
\]
Thus $J_{2}\geq0$ holds.

\medskip

\noindent\textsf{Induction step}: Assume that $J_{n-1}\geq0$ for $n\geq3$. It
suffices to show that $J_{n}\geq0$. Since $J_{n}$ is homogeneous, it suffices
to show that
\begin{equation}\label{pb:op}
\min_{x\in\Delta_{n}}J_{n}(x)\geq0.
\end{equation}

\noindent Since $\Delta_{n}$ is compact, there exists an element $p\in
\Delta_{n} $ such that $J_{n}(p)=\min\limits_{x\in\Delta_{n}} J_{n}(x)$. It
suffices to show $J_{n}(p)\geq0$. We consider the following two cases.

\begin{enumerate}
\item $p\in\Delta_{n}^{\circ}$ (the interior of $\Delta_{n}$).

By  Lemma \ref{le:lgr}, we know that $p$ is in the form $(t^{u}%
,r^{v})$ for some $t,r$ and $u+v=n$. Moreover,
$t,r>0$ since $p\in\Delta_{n}^{\circ}$. Hence it suffices to show%

\[
\forall u, v \geq0\left(  u + v = n\right)  , \; \forall t, r \in
\mathbb{R}_{+},\; \left(  J_{n}(t^{u}, r^{v}) \geq0\right).
\]

Since $J_{n}$ is homogeneous, it suffices to show%
\[
\forall u, v \geq0\left(  u + v = n\right)  ,\; \forall t\in\mathbb{R}_{+},\;
\left(  J_{n}(t^{u}, 1^{v})\geq0\right)  .
\]

Using the recursive formula
\[
h_{n,k} = \frac{1}{k} \sum_{i=1}^{k} h_{n,k-i} p_{n,i} \quad (\text{see \cite[Formula 6.2; 9]{Li1950}}),
\]
where \( p_{n,i} = \sum_{1\leq j\leq n} x_j^i \),
we rewrite \( J_n \) in terms of \( p_{n,i} \) for \( i=1,2,3 \) as follows:
\begin{equation} \label{eq:Jn-pn}
\begin{aligned}
J_n &= \frac{h_{n,2}^4}{\binom{n+1}{2}^4} - \frac{h_{n,3} h_{n,1}^5}{\binom{n+2}{3} \binom{n}{1}^5} \\
&= \frac{\left( \frac{1}{2} (p_{n,1}^2 + p_{n,2}) \right)^4}{\binom{n+1}{2}^4}
- \frac{ \frac{1}{3} \left( \frac{1}{2} (p_{n,1}^2 + p_{n,2}) p_{n,1} + p_{n,1}p_{n,2} + p_{n,3} \right) p_{n,1}^5 }{ \binom{n+2}{3} \binom{n}{1}^5 },
\quad \text{note } h_{n,0}=1\\
&= \frac{\left( \frac{1}{2} (p_{n,1}^2 + p_{n,2}) \right)^4}{\binom{u+v+1}{2}^4}
- \frac{ \frac{1}{6} p_{n,1}^8 + \frac{1}{2} p_{n,1}^{6} p_{n,2} + \frac{1}{3} p_{n,1}^{5}p_{n,3} }{ \binom{u+v+2}{3} \binom{u+v}{1}^5 },
\quad \text{since } n = u+v.
\end{aligned}
\end{equation}

Evaluating \( p_{n,i} \)  at \( (t^{u}, 1^v) \), we obtain
\begin{equation} \label{eq:p1-3}
p_{n,i}(t^{u}, 1^v) = ut^i + v,\  (i=1,2,3).
\end{equation}

It follows from (\ref{eq:Jn-pn}) and (\ref{eq:p1-3}) that $J_{n}(t^{u},1^{v})$ can be expressed as a rational function in $t,u,v$.

\bigskip

Observe that $J_{n}$ can be factored as $\hat{J}_{n} \breve{J}_{n}$, where
\[
\hat{J}_{n}=\frac{uv(t-1)^{2}}{(u+v+2)(u+v+1)^{4}(u+v)^{6}}\ ,
\]
and $\breve{J}_{n}$ is a polynomial in $t,u,v$. It is clear that $\hat{J}_{n}$ is
non-negative. Thus it is sufficient to show
\begin{equation}
\label{J1}\forall u, v \geq0 \left(  u + v = n\right)  ,\; \forall
t\in\mathbb{R}_{+},\; \breve{J}_{n}(t^{u},1^{v})\geq0.
\end{equation}

It can be challenging to check whether condition (\ref{J1}) holds by directly inspecting the coefficients of
 $\breve{J}_{n}$ in $t$, due to the
presence of negative terms. To address this difficulty, we introduce the following approach.

\medskip

Note that for $u=0$ or $v=0$, we have $J_{n}=\hat{J}_{n}\breve{J}_{n}=0$. Thus,
it suffices to show
\[
\forall u, v \geq1\left(  u + v = n\right)  ,\; \forall t\in\mathbb{R}_{+},\;
\breve{J}_{n}(t^{u},1^{v})\geq0.
\]
Setting $u=k+1$ and $v=\ell+1$, it then suffices to show%
\begin{equation}
\label{J2}\forall{k,\ell\geq0} \left(  k+\ell+2=n \right)  ,\; \forall
t\in\mathbb{R}_{+},\; \left(  \breve{J}_{n}(t^{k+1},1^{\ell+1})\geq0\right)  .
\end{equation}

By using a computer algebra system~\footnote{\url{https://github.com/XuYao7/Computation.git}}, we found
the following expression for $\breve{J}_{n}(t^{k+1},1^{\ell+1})$:
\[
\breve{J}_{n}(t^{k+1},1^{\ell+1}) = \sum_{i=0}^{6}
c_{i}t^{i},
\]
where the coefficients $c_{i}$ are polynomials in $k$ and $l$ with positive coefficients.
For instance,
{\footnotesize
\[
c_{6}= \left(  k +2\right)  \left(  k +1\right)  ^{3} \left(  k^{4}+2
k^{3} l +k^{2} l^{2}+12 k^{3}+17 k^{2} l +5 k \,l^{2}+49 k^{2}+43 k l +5
l^{2}+82 k +32 l +47\right).
\]
}
For completeness, the full expressions of the coefficients $c_i$ are given in~\ref{app:ci}.

\smallskip

\noindent Note that all the coefficients $c_{i}$ are positive. Hence the condition
(\ref{J2}) holds. Therefore, we conclude that $J_{n}(p)\geq0$.

\bigskip

\item $p\in\partial\Delta_{n}$ (the boundary of $\Delta_{n}$).

Since $J_{n}$ is symmetric and $p\in\partial\Delta_{n}$, we can assume,
without losing generality, that
\[
p=(\tilde{p},0).
\]

By the induction assumption, we have $J_{n-1}(\tilde{p})\geq0.$ Thus, it
suffices to show that%
\[
k_{1}J_{n}(p)\geq k_{2}J_{n-1}(\tilde{p})\geq0 \;\; \text{for some }
k_{1},k_{2}>0.
\]
We will choose $\displaystyle k_{1}=\binom{n+2}{3}\binom{n}{1}^{5}\,\text{and
}k_{2}=\binom{n+1}{3}\binom{n-1}{1}^{5}$. Note that%
\begin{align*}
k_{1}J_{n}(p)  & =\binom{n+2}{3}\binom{n}{1}^{5}\,\left(  \frac{h_{n,\left(
2^{4}\right)  }\left(  p\right)  }{\binom{n+1}{2}^{4}}-\frac{h_{n,\left(
3,1^{5}\right)  }\left(  p\right)  }{\binom{n+2}{3}\binom{n}{1}^{5}}\right)
\\
& =\underset{T\left(  n\right)  }{\underbrace{\frac{\binom{n+2}{3}\binom{n}%
{1}^{5}}{\binom{n+1}{2}^{4}}}}h_{n,\left(  2^{4}\right)  }\left(  p\right)
-h_{n,\left(  3,1^{5}\right)  }\left(  p\right)\\
& =T\left(  n\right)  )  h_{n,\left(  2^{4}\right)  }\left(
p\right)  -h_{n,\left(  3,1^{5}\right)  }\left(  p\right).
\end{align*}
Likewise note that%
\begin{align*}
k_{2}J_{n-1}(\tilde{p})  & =\binom{n+1}{3}\binom{n-1}{1}^{5}\left(
\frac{h_{n-1,\left(  2^{4}\right)  }\left(  \tilde{p}\right)  }{\binom{n}%
{2}^{4}}-\frac{h_{n-1,\left(  3,1^{5}\right)  }\left(  \tilde{p}\right)
}{\binom{n+1}{3}\binom{n-1}{1}^{5}}\right)  \\
& =\underset{T\left(  n-1\right)  }{\underbrace{\frac{\binom{n+1}{3}%
\binom{n-1}{1}^{5}}{\binom{n}{2}^{4}}}}h_{n-1,\left(  2^{4}\right)  }\left(
\tilde{p}\right)  -h_{n-1,\left(  3,1^{5}\right)  }\left(  \tilde{p}\right) \\
& =T\left(  n-1\right)  )  h_{n,\left(  2^{4}\right)  }\left(
p\right)  -h_{n,\left(  3,1^{5}\right)  }\left(  p\right),\quad
\text{since } h_{n,\lambda}\left(p\right)  =h_{n-1,\lambda}\left(\tilde{p}\right).
\end{align*}
Combining these expressions, we obtain
\begin{align*}
  & k_{1}J_{n}(p)-k_{2}J_{n-1}(\tilde{p})\\
= & \Big(T\left(  n\right)  h_{n,\left(
2^{4}\right)  }\left(  p\right)  -h_{n,\left(  3,1^{5}\right)  }\left(
p\right)  \Big)-\Big(T\left(  n-1\right)  h_{n,\left(  2^{4}\right)  }\left(
p\right)  -h_{n,\left(  3,1^{5}\right)  }\left(  p\right)  \Big)\\
= & \Big(T\left(  n\right)  -T\left(  n-1\right)  \Big)h_{n,\left(
2^{4}\right)  }\left(  p\right).
\end{align*}
Hence, it suffices to show that $T\left(  n\right)  \geq T\left(  n-1\right)
$. For this, let us simplify $T\left(  n\right)  $:%
\[
T\left(  n\right)  =\frac{\binom{n+2}{3}\binom{n}{1}^{5}}{\binom{n+1}%
{2}^{4}}=\frac{\frac{\left(  n+2\right)  \left(  n+1\right)  \left(  n\right)
}{3\cdot2\cdot1}n^{5}}{\left(  \frac{\left(  n+1\right)  \left(  n\right)
}{2\cdot1}\right)  ^{4}}=\frac{8}{3}\frac{n^{3}+2n^{2}}{\left(  n+1\right)
^{3}}.
\]
Viewing $n\geq3$ as a real number, it suffices to show $T^{\prime}\left(
n\right)  \geq0$. Note
\[
T^{\prime}\left(  n\right)  =\frac{8}{3}\frac{\left(  3n^{2}+4n\right)
\left(  n+1\right)  ^{3}-\left(  n^{3}+2n^{2}\right)  3\left(  n+1\right)
^{2}}{\left(  n+1\right)  ^{6}}=\frac{8}{3}\frac{n\left(  n+4\right)
}{\left(  n+1\right)  ^{4}}\geq0.
\]
We conclude that $J_{n}\left(  p\right)  \geq0$.
\end{enumerate}
\end{proof}

\begin{remark}
The proof of Lemma~\ref{le:d8} does not work for  general partitions. The reason is that,
in the case $p \in \Delta_{n}^{\circ}$, the formula
\[
\forall u, v \geq0\left(  u + v = n\right)  , \; \forall t, r \in
\mathbb{R}_{+},\; \left(  J_{n}(t^{u}, r^{v}) \geq0\right)
\]
does not hold for general partitions. For example, for partitions $(2^{3})$ and $(3, 1^{3})$,
 $J_{n}=H_{n,\left(  2^{3}\right)  }-H_{n,\left(  3, 1^{3}\right)  }\geq0$
 fails for $n \geq 3$ when $t=2$, $r=1$, $u=1$, and $v=n-1$.
\end{remark}

\begin{lemma}
\label{le:ml} Let $m\geq4$. We have

\begin{enumerate}
\item $\forall n \left(  H_{n,(2^{m})}\geq H_{n,(3,1^{2m-3})}\right)  $, and

\item $\forall n \left(  H_{n,(2^{m},1)}\geq H_{n,(3,1^{2m-2})}\right)  $.
\end{enumerate}
\end{lemma}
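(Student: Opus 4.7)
The plan is to derive both inequalities from Lemma~\ref{le:d8} by a multiplicative ``relaxation''. The key algebraic observation is that $H_{n,\lambda}$ is multiplicative under concatenation of partitions: if $\mu\cup\nu$ denotes the partition obtained by merging and sorting parts, then $h_{n,\mu\cup\nu}=h_{n,\mu}\,h_{n,\nu}$ directly from the product definition of $h_{n,\lambda}$ in Definition~\ref{def:ssp}, and the normalizing constant $h_{n,\lambda}(1,\ldots,1)$ is likewise multiplicative, so $H_{n,\mu\cup\nu}=H_{n,\mu}\,H_{n,\nu}$. On the nonnegative orthant each $H_{n,\nu}$ is a polynomial with nonnegative coefficients, hence $H_{n,\nu}\ge 0$. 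Combined with the identity $AB-CD=A(B-D)+(A-C)D$, this yields the elementary principle: if $A\ge C\ge 0$ and $B\ge D\ge 0$, then $AB\ge CD$. So it suffices to split each side into two factors and dominate the factors separately.

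For Part~1, I would decompose
\[
(\mathbf{2}_m)=(\mathbf{2}_4)\cup(\mathbf{2}_{m-4}), \qquad (3,\mathbf{1}_{2m-3})=(3,\mathbf{1}_5)\cup(\mathbf{1}_{2m-8}),
\]
noting that the residues $(\mathbf{2}_{m-4})$ and $(\mathbf{1}_{2m-8})$ share the matching degree $2m-8$. Lemma~\ref{le:d8} supplies the first-factor inequality $H_{n,(\mathbf{2}_4)}\ge H_{n,(3,\mathbf{1}_5)}$. For the second factor, a direct partial-sum check gives $(\mathbf{2}_{m-4})\succeq(\mathbf{1}_{2m-8})$, and the Cuttler--Greene--Skandera implication \emph{(majorization $\Rightarrow$ CH inequality)} cited in the introduction then yields $H_{n,(\mathbf{2}_{m-4})}\ge H_{n,(\mathbf{1}_{2m-8})}$. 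Multiplying the two factor inequalities finishes Part~1.

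For Part~2, the analogous decomposition is
\[
(\mathbf{2}_m,1)=(\mathbf{2}_4)\cup(\mathbf{2}_{m-4},1), \qquad (3,\mathbf{1}_{2m-2})=(3,\mathbf{1}_5)\cup(\mathbf{1}_{2m-7}).
\]
The residues $(\mathbf{2}_{m-4},1)$ and $(\mathbf{1}_{2m-7})$ both have degree $2m-7$, and the majorization $(\mathbf{2}_{m-4},1)\succeq(\mathbf{1}_{2m-7})$ is again immediate from partial sums, so the same two-factor argument closes the proof.

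I do not anticipate any substantive obstacle here: the heavy lifting was already carried out in Lemma~\ref{le:d8} (the base case $d=8$), and what remains is bookkeeping --- matching the degrees of the residues, verifying routine majorization relations, and invoking multiplicativity of $H$ together with the Cuttler--Greene--Skandera direction. The boundary cases $m=4$ are harmless, since the residue $(\mathbf{2}_{m-4})$ is then empty and the corresponding factor is identically~$1$.
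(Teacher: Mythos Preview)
Your argument is correct. Both you and the paper reduce to Lemma~\ref{le:d8} via multiplicativity of $H$, but the organization differs. The paper sets $F_{n,m}=H_{n,(\mathbf{2}_m)}/H_{n,(3,\mathbf{1}_{2m-3})}$ and proves $F_{n,m}\ge F_{n,m-1}$ by an explicit one-step computation: the ratio simplifies to $\frac{2n}{n+1}\,h_{n,2}/h_{n,1}^2$, which is $\ge 1$ by the elementary bound $\sum x_i^2\ge\frac{1}{n}(\sum x_i)^2$; chaining these down to $F_{n,4}\ge 1$ (Lemma~\ref{le:d8}) gives Part~1, and Part~2 follows by cancelling a common factor $H_{n,(1)}$. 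Your version instead performs a single decomposition $(\mathbf{2}_m)=(\mathbf{2}_4)\cup(\mathbf{2}_{m-4})$, $(3,\mathbf{1}_{2m-3})=(3,\mathbf{1}_5)\cup(\mathbf{1}_{2m-8})$ and invokes the Cuttler--Greene--Skandera direction for the residue inequality $H_{n,(\mathbf{2}_{m-4})}\ge H_{n,(\mathbf{1}_{2m-8})}$. Unpacked, that residue inequality is exactly $(H_{n,2})^{m-4}\ge (H_{n,1}^2)^{m-4}$, i.e.\ the same power-mean fact the paper proves by hand, now absorbed into the cited theorem. So the underlying content is identical; your route is shorter and avoids the ratio (hence any worry about vanishing denominators), at the cost of appealing to \cite{CGS2011} as a black box, while the paper's route keeps the argument entirely self-contained.
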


\begin{proof}
We prove each claim.

\begin{enumerate}
\item
Let
$F_{n,m}=\frac{H_{n,(2^{m})}}{H_{n,(3,1^{2m-3})}}.$
From Lemma \ref{le:d8}, we have $F_{n,4}\geq1$.
We also have%
\begin{equation}
F_{n,m}\geq F_{n,m-1}. \label{eq:02}%
\end{equation}
Since
\begin{align*}
\frac{F_{n,m}}{F_{n,m-1}}  &  =\frac{\frac{H_{n,(2^{m})}}{H_{n,(3,1^{2m-3})}}}%
{\frac{H_{n,(2^{m-1})}}{H_{n,(3,1^{2m-5})}}}%
=\frac{\frac{\binom{n+2}{3}\binom{n}{1}^{2m-3}}{\binom{n+1}{2}^{m}}\frac{(h_{n,2})^{m}}{h_{n,3}\ (h_{n,1})^{2m-3}}}%
{\frac{\binom{n+2}{3}\binom{n}{1}^{2m-5}}{\binom{n+1}{2}^{m-1}}\frac{(h_{n,2})^{m-1}}{h_{n,3}\ (h_{n,1})^{2m-5}}}%
=\frac{h_{n,2}}{\binom{n+1}{2}}\frac{\binom{n}{1}^{2}}{(h_{n,1})^{2}}\\
&  =\frac{H_{n,(2)}}{H_{n,(1^{2})}} \geq 1 \text{\;\;\;by Theorem \ref{thm:CGS}} .
\end{align*}

Iterating inequality (\ref{eq:02}) and noting that $F_{n,4}\geq1$ by Lemma \ref{le:d8}, we have
\[
F_{n,m}\geq F_{n,{m-1}}\geq\cdots\geq F_{n,4}\geq1.
\]
Hence
\[
\forall n \left(  H_{n,(2^{m})}\geq H_{n,(3,1^{2m-3})}\right)  ,\ \text{for}\ m\geq4.
\]

\smallskip

\item
Note%
\[
\frac{H_{n,(2^{m},1)}}{H_{n,(3,1^{2m-2})}}%
=\frac{H_{n,(2^{m})}\ H_{n,\left(  1\right)  }}{H_{n,(3,1^{2m-3})}\ H_{n,\left(  1\right)  }}=\frac{H_{n,(2^{m})}%
}{H_{n,(3,1^{2m-3})}}=F_{n,m}\geq1.
\]

Hence
\[
\forall n \left(  H_{n,(2^{m},1)}\geq H_{n,(3,1^{2m-2})}\right)  ,\ \text{for}\ m\geq4.
\]

\end{enumerate}
\end{proof}

\noindent Finally we are ready to prove Theorem~\ref{Th:mt}.

\begin{proof}
[Proof of Theorem \ref{Th:mt}] Let $d\geq8$. We consider two cases depending
on the parity of $d$.

\begin{enumerate}
\item Consider $d=2m$. Take $\mu=(2^{m}),\lambda=(3,1^{2m-3})$. \ It is easy to see that
\[
\mu=(2^{m})=(\underset{m}{\underbrace{2,\ldots,2}})\ \ \nsucceq
\ \ (3,\underset{2m-3}{\underbrace{1,\ldots,1}})=(3,1^{2m-3}%
)=\lambda.
\]
However, from Lemma \ref{le:ml}, we have
\[
H_{n,\mu}=H_{n,(2^{m})}\ \ \geq\ \ H_{n,(3,1^{2m-3}%
)}=H_{n,\lambda}, \text{ for every}\ n.
\]

\item Consider $d=2m+1$. Take $\mu=(2^{m},1),\lambda=(3,1^{2m-2})$. It is easy to see that
\[
\mu=(2^{m},1)=(\underset{m}{\underbrace{2,\ldots,2},1})\ \ \nsucceq
\ \ (3,\underset{2m-2}{\underbrace{1,\ldots,1}})=(3,1^{2m-2}%
)=\lambda.
\]
However, from Lemma \ref{le:ml}, we have
\[
H_{n,\mu}=H_{n,(2^{m},1)}\ \ \geq\ \ H_{n,(3,1^{2m-2}%
)}=H_{n,\lambda}.
\]

\end{enumerate}

\noindent Hence we have proved that
\[
\forall{d\geq8,\ \ }\exists{\mu,\lambda\in Par\left(  d\right)  ,\ \ \ }%
\forall n\ (H_{n,\mu}\geq H_{n,\lambda})\;\wedge\;\mu\not \succeq \lambda.
\]

Thus, the preceding arguments show that the condition $C(d)$ fails for every $d\ge 8$.
This completes the proof of Theorem~\ref{Th:mt}.
\end{proof}

\appendix
\section{Explicit expressions of the coefficients $c_i$}
\label{app:ci}
{\footnotesize
\begin{align*}
c_{6}  &  =\left(  k +2\right)  \left(  k +1\right)  ^{3} \left(  k^{4}+2
k^{3} l +k^{2} l^{2}+12 k^{3}+17 k^{2} l +5 k l^{2}+49 k^{2}+43 k l +5
l^{2}+82 k +32 l +47\right)  ,\\
c_{5}  &  =2 \left(  k +2\right)  \left(  k +1\right)  ^{3} \left(  3 k^{3} l
+6 k^{2} l^{2}+3 k l^{3}+2 k^{3}+32 k^{2} l +37 k l^{2}+7 l^{3}+21
k^{2}+106 k l +52 l^{2}+64 k \right. \\
& \quad \left. +109 l +60\right)  ,\\
c_{4}  &  =\left(  l +1\right)  \left(  k +1\right)  ^{2} \left(  15 k^{4} l
+30 k^{3} l^{2}+15 k^{2} l^{3}+11 k^{4}+173 k^{3} l +208 k^{2} l^{2}+46 k
l^{3}+121 k^{3}+677 k^{2} l\right. \\
& \quad \left. +426 k l^{2}+35 l^{3}+442 k^{2}+1074 k l +272 l^{2}+662 k +599 l +354\right)  ,\\
c_{3}  &  =4 \left(  l +1\right)  ^{2} \left(  k +1\right)  ^{2} \left(  5
k^{3} l +10 k^{2} l^{2}+5 k l^{3}+6 k^{3}+53 k^{2} l +53 k l^{2}+6
l^{3}+51 k^{2}+157 k l +51 l^{2} \right. \\
& \quad \left. +125 k+125 l +88\right)  ,\\
c_{2}  &  =\left(  l +1\right)  ^{2} \left(  k +1\right)  \left(  15 k^{3}
l^{2}+30 k^{2} l^{3}+15 k l^{4}+46 k^{3} l +208 k^{2} l^{2}+173 k l^{3}+11
l^{4}+35 k^{3}+426 k^{2} l \right. \\
& \quad \left.  +677 k l^{2}+121 l^{3}+272 k^{2}+1074 k l +442 l^{2}+599 k +662 l +354\right)  ,\\
c_{1}  &  =2 \left(  l +2\right)  \left(  l +1\right)  ^{3} \left(  3 k^{3} l
+6 k^{2} l^{2}+3 k l^{3}+7 k^{3}+37 k^{2} l +32 k l^{2}+2 l^{3}+52
k^{2}+106 k l +21 l^{2} \right. \\
& \quad \left. +109 k +64 l +60\right)  ,\\
c_{0}  &  =\left(  l +2\right)  \left(  l +1\right)  ^{3} \left(  k^{2}
l^{2}+2 k l^{3}+l^{4}+5 k^{2} l +17 k l^{2}+12 l^{3}+5 k^{2}+43 k l +49
l^{2}+32 k +82 l +47\right)  .
\end{align*}
}
\bigskip

\noindent\textbf{Acknowledgements}. The authors are grateful to Bi-can Xia for
drawing their attention to some relevant references and to Hoon Hong for
helpful conversations. Special thanks to the anonymous reviewers for their
time, effort, and valuable input in improving the quality of this manuscript.
This work was supported by the Fundamental Research Funds for the Central Universities, Southwest
Minzu University (No. ZYN2025111).


\end{document}